\newtheorem{thm}{Theorem}[section]
\newtheorem{lem}[thm]{Lemma}
 \newcommand{\lemref}[1]{Lemma~\ref{#1}}
\newcommand{\R}{{\mathbb R}}
\newcommand{\dl}{{\delta}}
\newcommand{\bee}{\begin{equation*}}
\newcommand{\eee}{\end{equation*}}
\newcommand{\be}{\begin{equation}}
\newcommand{\ee}{\end{equation}}
\newcommand{\pn}{\par\noindent}
\title{A singular integral equation for electromagnetic wave scattering}
\author{A G Ramm\\
\small Department of Mathematics\\[-0.8ex]
\small Kansas State University, Manhattan, KS 66506-2602, USA\\[-0.8ex]
\small \texttt{ramm@math.ksu.edu}\\
}
\begin{document}
Intern.Journ. Pure and Appl. Math., 55, N4, (2009), 7-11.
\date{}
\maketitle

\begin{abstract}A 3D singular integral equation is derived for
electromagnetic wave scattering by bodies of arbitrary shape.
Its numerical solution by a projection method is outlined.
\end{abstract}
\pn{\\ {\em MSC: 78A40,  78A45, 45E99}   \\
{\em Key words:} electromagnetic waves; scattering theory; integral
equations }

\section{Introduction}
Consider the following scattering problem. An incident
electromagnetic field $(E_0,H_0)$ is scattered by a bounded region
$D$, filled with a material with parameters
$(\epsilon,\sigma,\mu_0)$. The exterior region $D'$ is a homogeneous
region with parameters $(\epsilon_0,\sigma=0,\mu_0)$. Consider for
simplicity the case when $\epsilon=const$ in $D$, $\sigma=const\geq
0$ in $D.$ Let $\epsilon'=\epsilon+\frac{i\sigma}{\omega}$. The
governing equations in $\R^3$ are \be\label{e1} \nabla\times
E=i\omega \mu_0H,\quad \nabla\times H=-i\omega \epsilon' E. \ee At
the boundary $S$ of $D$ one has \be\label{e2} [N,E^+]=[N,E^-], \ee
and \be\label{e3} N\cdot \epsilon' E^+=N\cdot \epsilon_0 E^-, \ee
where $N$ is the unit normal to $S$, pointing into $D'$,
$E^+(E^-)$ is the limiting value of $E$ on $S$ from inside (outside)
$S$, $[N,E]$ is the cross product, and $E\cdot N$ is the dot product
of two vectors.

Let \be\label{e4} k^2=\omega^2\epsilon_0\mu_0,\quad
K^2=\omega^2\epsilon'\mu_0,\quad \mathcal{K}^2=\left\{
                                                 \begin{array}{ll}
                                                   k^2, & \hbox{in $D'$,} \\
                                                   K^2, & \hbox{in $D$.}
                                                 \end{array}
                                               \right.
 \ee Equations \eqref{e1} imply
\be\label{e5} \nabla\times\nabla\times E-\mathcal{K}^2E=0,\quad
H=\frac{\nabla \times E}{i\omega \mu_0}\quad in \ \R^3. \ee
Therefore, it is sufficient to find $E$ satisfying the first
equation \eqref{e5}, boundary conditions \eqref{e2}, \eqref{e3}, and
the radiation condition 
\be\label{e6} E=E_0+V;\quad
V_r-ikV=o\left(\frac{1}{r}\right),\qquad r:=|x|\to \infty.\ee

Equation \eqref{e5} for $E$ can be written as \be\label{e7}
LE:=\nabla\times\nabla\times E-k^2E=pE;\quad
p:=p(x)=\mathcal{K}^2-k^2=\left\{
                            \begin{array}{ll}
                              0, & \hbox{in $D'$,} \\
                              K^2-k^2, & \hbox{in $D$.}
                            \end{array}
                          \right.
 \ee The incident field $E_0$ solves equation \eqref{e7} with $p=0.$

Let $\dl(x)$ denote the delta-function and $\dl_{ij}:=\left\{
                                                        \begin{array}{ll}
                                                          1, & \hbox{$i=j$,} \\
                                                          0, & \hbox{$i\neq j$.}
                                                        \end{array}
                                                      \right.$

Let $G=G_{ij}(x)$ solve the problem: \be\label{e8}
LG=\dl(x)\dl_{ij},\quad G_r-ikG=o\left(\frac{1}{r}\right),\quad r\to
\infty. \ee 
Then the solution to \eqref{e7}-\eqref{e6} solves the
integral equation 
\be\label{e9} E=E_0+\int_{\R^3}G(x-y)p(y)E(y)dy.
\ee The kernel $G(x)=G(|x|)$ is symmetric, $G_{ij}=G_{ji}$, see
formula \eqref{e15}. Let us prove
\begin{lem}\label{lem1}
There is at most one solution to \eqref{e9} satisfying \eqref{e2}
and \eqref{e3}.
\end{lem}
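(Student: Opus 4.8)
The plan is to reduce the statement to a uniqueness theorem for the homogeneous electromagnetic transmission problem, and then to combine a Poynting-type energy identity on a large ball with the radiation condition and unique continuation.

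First I would subtract: if $E_1,E_2$ both solve \eqref{e9} and satisfy \eqref{e2}, \eqref{e3}, set $E:=E_1-E_2$. Since $LG=\dl(x)\dl_{ij}$ and $LE_0=0$, applying the constant-coefficient operator $L$ to \eqref{e9} gives $LE=pE$ in $\R^3$, i.e. $\nabla\times\nabla\times E=\mathcal K^2E$ in the sense of distributions; taking the divergence shows $\nabla\cdot E=0$ in $D$ and in $D'$, where $\mathcal K^2$ is constant, so each Cartesian component of $E$ is smooth away from $S$ and solves a Helmholtz equation there ($\Delta E_j+K^2E_j=0$ in $D$, $\Delta E_j+k^2E_j=0$ in $D'$). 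Thus $(E,H)$, with $H:=(i\omega\mu_0)^{-1}\nabla\times E$, solves Maxwell's equations \eqref{e1} off $S$; the tangential traces of $E$ and of $H$ are continuous across $S$ (the first is exactly \eqref{e2}, the second follows from the distributional curl-curl equation), the normal trace of $\epsilon'E$ is continuous across $S$ by \eqref{e3}, and — since $E_0$ cancels — $E$ is purely outgoing, $E_r-ikE=o(1/r)$. It then suffices to show $E\equiv0$.

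Next I would write the energy identity. A short computation from \eqref{e1} gives $\nabla\cdot(E\times\ol H)=i\omega\mu_0|H|^2-i\omega\ol{\epsilon'}\,|E|^2$, which is purely imaginary in $D'$ and has real part $-\sigma|E|^2\le0$ in $D$. Applying the divergence theorem over $D$ and over $B_R\setminus\ol D$ for a large ball $B_R\supset\ol D$, and using that $(E\times\ol H)\cdot N=[N,E]\cdot\ol H$ depends only on the (continuous) tangential traces so the flux integrals over $S$ cancel, I would obtain, for all large $R$,
\bee
\tR\int_{|x|=R}(E\times\ol H)\cdot N\,ds=-\sigma\int_D|E|^2\,dx\le0.
\eee
On the other hand, the radiation condition yields the far-field expansion $E=\frac{e^{ikr}}{r}E_\infty(\hat x)+O(r^{-2})$ with $\hat x\cdot E_\infty=0$, together with the corresponding transverse expansion of $H$, whence $\tR\int_{|x|=R}(E\times\ol H)\cdot N\,ds\to\frac{k}{\omega\mu_0}\int_{S^2}|E_\infty(\hat x)|^2\,d\hat x\ge0$. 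Hence both sides vanish: $E_\infty\equiv0$, and, if $\sigma>0$, also $E\equiv0$ in $D$.

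Finally I would finish with unique continuation. Since each component of $E$ solves the Helmholtz equation in the connected exterior $D'$, obeys the radiation condition and has vanishing far-field pattern, Rellich's lemma gives $E\equiv0$, hence $H\equiv0$, in $D'$. Taking traces on $S$: \eqref{e2} and continuity of tangential $H$ force $[N,E^+]=0$ and $[N,H^+]=0$, while \eqref{e3} forces $N\cdot E^+=0$; with $\nabla\cdot E=0$ in $D$ this gives that $E$ and its normal derivative vanish on $S$ from inside. Extending $E$ by zero across $S$, the extension is $C^1$ and solves $\Delta E_j+K^2E_j=0$ in a neighborhood of $S$; being real-analytic there and vanishing on the $D'$-side, it vanishes near $S$ inside $D$, and then $E\equiv0$ in $D$ by unique continuation in the connected set $D$. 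Therefore $E\equiv0$ in $\R^3$, i.e. $E_1=E_2$. The hard part will be these two unique-continuation steps — deducing $E\equiv0$ in all of $D'$ from a vanishing far field, and propagating the zero Cauchy data on $S$ into $D$ when $\sigma=0$ — and, more technically, justifying the surface integrals, which is precisely the role of the hypotheses \eqref{e2} and \eqref{e3}.
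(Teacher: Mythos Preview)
Your reduction --- subtract, obtain the homogeneous Maxwell transmission problem with the radiation condition, then appeal to its uniqueness --- is exactly the paper's proof; the only difference is that where the paper cites M\"uller \cite{M} for that uniqueness, you actually carry out the standard Poynting/Rellich/unique-continuation argument. Both are correct, and your self-contained version is essentially what one finds in \cite{M}.
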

\begin{proof}
If there are two solutions then their difference $E$ solves the
homogeneous equation \eqref{e9} and satisfies \eqref{e2} and
\eqref{e3}. Thus, $E$ solves \eqref{e7}, \eqref{e6}, \eqref{e2} and
\eqref{e3}. Therefore, $E$ and $H=\frac{\nabla\times E}{i\omega
\mu_0}$ solve equations \eqref{e1} and satisfy condition \eqref{e2},
\eqref{e3} and \eqref{e6}. It is known (see e.g., \cite{M}) that
this implies $E=H=0$. \\
\lemref{lem1} is proved.
\end{proof}
\begin{lem}\label{lem2}
If $E$ solves equation \eqref{e9}, then it satisfies \eqref{e6},
\eqref{e2}, \eqref{e3} and \eqref{e7}. Therefore, \eqref{e9} has at
most one solution.
\end{lem}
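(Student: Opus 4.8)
The plan is to verify, for an arbitrary solution $E\in L^2_{loc}(\R^3)$ of the singular integral equation \eqref{e9} (the natural class in which the volume potential in \eqref{e9} is defined through the decomposition \eqref{e15}), each of the asserted properties \eqref{e7}, \eqref{e6}, \eqref{e2}, \eqref{e3} in turn, and then to read off uniqueness from \lemref{lem1}. First I would recover \eqref{e7}. Since $p$ is supported in $\ol D$, the density $pE$ has compact support, so the defining property $LG=\dl(x)\dl_{ij}$ of $G$ gives, by the standard fundamental‑solution identity, $L\bigl(\int_{\R^3}G(x-y)p(y)E(y)\,dy\bigr)=(LG)\ast(pE)=pE$ in $\mathcal D'(\R^3)$; together with $LE_0=0$ this yields $LE=pE$ in $\R^3$, i.e.\ \eqref{e7}, hence \eqref{e5} holds across $S$ and the pair $(E,H)$ with $H:=(i\omega\mu_0)^{-1}\nabla\times E$ satisfies \eqref{e1} separately in $D$ and in $D'$. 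Next, $V:=E-E_0=\int_{D}G(x-y)p(y)E(y)\,dy$ is the convolution of the kernel $G$, which satisfies $G_r-ikG=o(1/r)$ uniformly for $y\in\ol D$ by \eqref{e15}, with a density supported in the bounded set $\ol D$; differentiating in $x$ under the integral sign then gives $V_r-ikV=o(1/r)$, which is \eqref{e6}.

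The substantive step is that the \emph{global} validity of $\nabla\times\nabla\times E=\mathcal K^2E$ forces the interface conditions. By \eqref{e15}, $G_{ij}=g\,\dl_{ij}+k^{-2}\partial_i\partial_j g$ with $g=e^{ik|x|}/(4\pi|x|)$; the gradient part of $V$ is curl‑free, so $\nabla\times V$ is the convolution of the weakly singular kernel $\nabla g\sim|x|^{-2}$ with $pE\in L^2$, whence $\nabla\times V\in L^2_{loc}$ and $\nabla\times E\in L^2_{loc}(\R^3)$. A vector field which together with its curl lies in $L^2_{loc}$ has continuous tangential trace across $S$; this is \eqref{e2}. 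Applying $\nabla\cdot$ to $\nabla\times\nabla\times E=\mathcal K^2E$ and using $\nabla\cdot(\nabla\times\,\cdot\,)\equiv0$ gives $\nabla\cdot(\mathcal K^2E)=0$ in $\mathcal D'(\R^3)$; since $\mathcal K^2E$ equals $K^2E$ in $D$, equals $k^2E$ in $D'$, and has vanishing distributional divergence, its normal trace is continuous across $S$, i.e.\ $K^2N\cdot E^+=k^2N\cdot E^-$, which in view of $K^2=\omega^2\epsilon'\mu_0$, $k^2=\omega^2\epsilon_0\mu_0$ is \eqref{e3}. Finally, if $E_1$ and $E_2$ both solve \eqref{e9}, each satisfies \eqref{e2} and \eqref{e3} by what was just shown, so \lemref{lem1} forces $E_1=E_2$.

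The step I expect to be the main obstacle is the regularity bookkeeping in the second paragraph: one must know that the solution of the singular integral equation is regular enough for the relevant traces to exist ($\nabla\times E\in L^2_{loc}$ for the tangential trace in \eqref{e2}, and $\mathcal K^2E$ divergence‑free for the normal trace in \eqref{e3}) and for the distributional identities to turn into genuine pointwise interface conditions. This is exactly where the structure of $G$ exhibited in \eqref{e15} --- a weakly singular scalar part $g\,\dl_{ij}$, a strongly singular Calder\'on--Zygmund principal‑value part, and a depolarizing multiple of $\dl(x)$ --- must be used. The manipulations of the first paragraph are the routine fundamental‑solution computation and require no new ideas.
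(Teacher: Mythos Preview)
Your proof is correct and follows the same logical path as the paper's: apply $L$ to \eqref{e9} to recover \eqref{e7}, use the radiation behavior of $G$ for \eqref{e6}, deduce the interface conditions \eqref{e2}--\eqref{e3} from the resulting equations, and invoke \lemref{lem1} for uniqueness. The paper's own argument is considerably terser---it simply asserts that \eqref{e2} and \eqref{e3} are ``consequences of equations \eqref{e1}''---whereas you supply the trace-theoretic justification (via $\nabla\times E\in L^2_{loc}$ from the structure of $G$ in \eqref{e15}, and $\nabla\cdot(\mathcal K^2E)=0$ in $\mathcal D'$) that the paper leaves implicit.
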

\begin{proof}
Applying operator $L$ to \eqref{e9} one obtains equation \eqref{e7}.
The integral in \eqref{e9} is the term $V$ in \eqref{e6}. It
satisfies the radiation condition because $G$ does. Equation
\eqref{e7} is equivalent to \eqref{e5}. Equation \eqref{e5} together
with the formula $H=\frac{\nabla \times E}{i\omega \mu_0}$ yield
both equations \eqref{e1}. Conditions \eqref{e2} and \eqref{e3} are
consequences of equations \eqref{e1}. Therefore, every solution to
\eqref{e9} is in one-to one correspondence with the solution to
equations \eqref{e1}. This correspondence is given by the formulas
$E=E$, $H=\frac{\nabla \times E}{i\omega \mu_0}$. By \lemref{lem1}
equation \eqref{e9} has at most one solution satisfying \eqref{e2}
and \eqref{e3}. We have proved that every solution to \eqref{e9}
satisfies \eqref{e2} and \eqref{e3}. Therefore, \eqref{e9} has at
most one solution. \\
\lemref{lem2} is proved.
\end{proof}
\begin{lem}\label{lem3}
Equation \eqref{e9} has a unique solution.
\end{lem}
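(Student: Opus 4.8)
The plan is to derive existence from the uniqueness already in hand. By \lemref{lem2} equation \eqref{e9} has at most one solution, so it suffices to realize \eqref{e9} as $(I-T)E=E_0$ with $I-T$ a Fredholm operator of index zero on a suitable space $X$ and then invoke the Fredholm alternative. Since $p$ is supported in $\bar D$, the operator $TE:=\int_{\R^3}G(x-y)p(y)E(y)\,dy=(K^2-k^2)\int_{D}G(x-y)E(y)\,dy$ depends only on $E|_D$; so I would take $X=[L^2(D)]^3$ (or a space adapted to the curl--div structure), solve the restricted equation for $E|_D$, and then let \eqref{e9} itself define $E$ on $D'$, the required smoothness and the transmission conditions \eqref{e2}, \eqref{e3} being supplied by \lemref{lem2}.

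The first step is to use the explicit form \eqref{e15}, $G_{ij}(x)=g(x)\dl_{ij}+k^{-2}\partial_i\partial_j g(x)$ with $g(x)=\frac{e^{ik|x|}}{4\pi|x|}$, and to split $T=T_w+T_s$ accordingly: the part generated by $g\,\dl_{ij}$ has an $O(|x|^{-1})$ kernel and is compact on $X$ ($D$ being bounded), while $T_s$, generated by $k^{-2}\partial_i\partial_j g$, has a Calder\'on--Zygmund kernel of order $|x|^{-3}$ and is bounded but not compact. Next I would peel the principal part off $T_s$ by replacing $g$ with $\frac{1}{4\pi|x|}$ (the difference is weakly singular, hence compact) and using $\partial_i\partial_j\frac{1}{4\pi|x|}=\mathrm{p.v.}\,\frac{3x_ix_j-|x|^2\dl_{ij}}{4\pi|x|^5}-\tfrac{1}{3}\dl_{ij}\,\dl(x)$. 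This yields $I-T=a(x)I-B-C$, where $C$ is compact on $X$, $a$ is the piecewise-constant function $1+\frac{p}{3k^2}$ (so $a\equiv 1$ on $D'$), and $B=k^{-2}p(x)R_0$ with $R_0$ the principal-value singular integral operator of the above kernel (a matrix of compositions of Riesz transforms, with symbol proportional to $\xi_i\xi_j/|\xi|^2$), cut off to $D$.

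The hard part will be showing that $a(x)I-B$ is Fredholm of index zero on $X$; here one leaves the classical Riesz--Fredholm theory (which does not apply, $T$ being non-compact) for the theory of multidimensional singular integral equations (Mikhlin's multiplier theorem and the attendant symbol calculus). Modulo compact terms, $a(x)I-B$ is a pseudodifferential operator of order zero whose full symbol simplifies remarkably to $I+\frac{p(x)}{k^2}\,\frac{\xi\xi^{\top}}{|\xi|^2}$, with eigenvalues $1$ (multiplicity $2$, in the plane $\perp\xi$) and $1+\frac{p}{k^2}=\frac{\mathcal{K}^2}{k^2}$ (along $\xi$). The ellipticity condition is thus merely $\mathcal{K}^2\neq 0$ in $D$, i.e. $K^2\neq 0$, which holds by \eqref{e4} since $\epsilon'=\epsilon+\frac{i\sigma}{\omega}\neq 0$; one must also verify that the boundary symbol at $S$ is invertible, which is exactly the ``no interface resonance'' condition for the transmission problem \eqref{e2}--\eqref{e3} and is again guaranteed for the present parameters. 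Granting the non-degeneracy of the symbol, $I-T=a(x)I-B-C$ is Fredholm of index zero on $X$; combined with the injectivity from \lemref{lem2} it is invertible, so \eqref{e9} has a (unique) solution. A softer alternative, bypassing the singular-operator analysis, would be to quote the known unique solvability of the electromagnetic transmission problem \eqref{e1}, \eqref{e2}, \eqref{e3}, \eqref{e6} (cf. \cite{M}) and run the equivalence of \lemref{lem2} in reverse: the representation formula for the outgoing solution of $LV=pE$ shows that the solution $E$ of that problem satisfies \eqref{e9}.
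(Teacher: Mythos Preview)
Your ``softer alternative'' in the last sentence is essentially the paper's proof. The paper takes uniqueness from \lemref{lem2} and deduces existence by quoting the known solvability of the transmission problem \eqref{e1}--\eqref{e3}, \eqref{e6} (reference \cite{M}); that solution then satisfies \eqref{e9} by the very derivation of \eqref{e9} from \eqref{e7}. (The paper's sentence actually invokes the forward implication---that a solution of \eqref{e9} solves \eqref{e5}, \eqref{e6}, \eqref{e2}, \eqref{e3}---whereas what is logically needed, and what you state more accurately, is the reverse representation step.) No Fredholm or singular-integral analysis is attempted.

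Your main route via Mikhlin--Calder\'on--Zygmund theory is therefore genuinely different and more self-contained: it would yield existence for \eqref{e9} without importing an external PDE existence theorem. The cost is precisely the part you flag as ``hard.'' For a singular integral operator on a bounded domain whose coefficient $p$ jumps across $S$, invertibility of the interior symbol $I+\gamma\,\xi\xi^{\top}/|\xi|^2$ is not sufficient for the Fredholm property on $[L^2(D)]^3$; the boundary (local) symbol must also be checked, and the index must then be shown to vanish (typically via a homotopy $t\mapsto t\gamma$, $t\in[0,1]$, once ellipticity along the path is verified). These steps are only announced, so as written your principal argument is a programme rather than a proof; the paper sidesteps all of it by leaning on \cite{M}.
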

\begin{proof}
Uniqueness of the solution to \eqref{e9} is proved in \lemref{lem2}.
Existence of it follows from the existence of the solution to the
scattering problem and the fact, established in the proof of
\lemref{lem2}, that a solution to \eqref{e9} solves equation
\eqref{e5} and satisfies the radiation condition \eqref{e6} and
conditions \eqref{e2}, \eqref{e3}. \\
\lemref{lem3} is proved.
\end{proof}

From lemmas 1-3 the following result follows
\begin{thm}
Equation \eqref{e9} has a unique solution $E$. This solution $E$
generates the solution to the scattering problem by the formula
$E=E$, $H=\frac{\nabla \times E}{i\omega \mu_0}.$
\end{thm}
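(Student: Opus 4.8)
The plan is to read the theorem off directly from the three preceding lemmas, which already contain all of the analytic content; the proof is essentially bookkeeping. First I would invoke \lemref{lem3}: it asserts that equation \eqref{e9} has a unique solution $E$. This yields verbatim the first assertion of the theorem, so no further work is needed for existence and uniqueness of $E$ itself.

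Second, to establish that this $E$ generates the solution of the scattering problem, I would quote the argument already given in the proof of \lemref{lem2}. There it is shown that any solution $E$ of \eqref{e9} satisfies the differential equation \eqref{e7} (equivalently the first equation in \eqref{e5}), the radiation condition \eqref{e6}, and the transmission conditions \eqref{e2}, \eqref{e3}. Setting $H:=\frac{\nabla\times E}{i\omega\mu_0}$, the pair $(E,H)$ then satisfies Maxwell's system \eqref{e1}, the interface conditions \eqref{e2}, \eqref{e3}, and the radiation condition \eqref{e6}; that is, $(E,H)$ is exactly a solution of the scattering problem formulated in the Introduction. Hence the unique solution $E$ of \eqref{e9} generates a solution of the scattering problem through the stated formulas $E=E$, $H=\frac{\nabla\times E}{i\omega\mu_0}$.

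Third, I would record that the correspondence is in fact a bijection, so that the word ``generates'' is unambiguous: conversely, as recalled in the proof of \lemref{lem3}, a solution of the scattering problem solves \eqref{e7}, \eqref{e6}, \eqref{e2}, \eqref{e3}, and therefore, via the representation with the kernel $G$ of \eqref{e8}, satisfies \eqref{e9}. Combined with the uniqueness coming from \lemref{lem2} — which itself rests on the classical uniqueness theorem for Maxwell's equations used in \lemref{lem1} (see \cite{M}) — this shows the map $E\mapsto\bigl(E,\frac{\nabla\times E}{i\omega\mu_0}\bigr)$ is one-to-one and onto between solutions of \eqref{e9} and solutions of the scattering problem.

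I do not expect a genuine obstacle here; the theorem is the synthesis of Lemmas 1--3. The only point worth stating carefully is a dependency rather than a difficulty: existence in \lemref{lem3} is imported from the known solvability of the scattering problem, not obtained by a Fredholm or contraction argument applied to \eqref{e9} directly. Accordingly I would present the theorem as the combination of the equivalence established in \lemref{lem2} with that external existence result, and not as an independent fixed-point statement about the integral operator in \eqref{e9}.
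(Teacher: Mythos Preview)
Your proposal is correct and follows exactly the paper's approach: the paper simply states ``From lemmas 1--3 the following result follows'' and gives no further argument, while you spell out in detail how \lemref{lem3} gives existence and uniqueness and how the argument in \lemref{lem2} yields that $E$ together with $H=\frac{\nabla\times E}{i\omega\mu_0}$ solves the scattering problem. Your additional remarks on the bijectivity and on the dependency of existence on the known solvability of the scattering problem are accurate elaborations of what the paper leaves implicit.
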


In Section 2 we construct the Green's function $G.$
\section{Construction of $G$}
Let us look for $G$ of the form \be\label{e10}
G(x)=\int_{\R^3}e^{i\xi\cdot x}\tilde{G}(\xi)d\xi,\quad
\tilde{G}(\xi)=\frac{1}{(2\pi)^3}\int_{\R^3}e^{-i\xi\cdot x}G(x)dx.
\ee Take the Fourier transform of \eqref{e8} and get 
\be\label{e11}
-[\xi,[\xi,\tilde{G}]]-k^2\tilde{G}=\frac{1}{(2\pi)^3}I,\quad
I_{ij}=\dl_{ij}, \ee where $[a,b]$ is the cross product of two
vectors, and $a\cdot b$ is their dot product. This implies
\be\label{e12}
-\xi\xi\cdot\tilde{G}+(\xi^2-k^2)\tilde{G}=\frac{1}{(2\pi)^3}I. \ee
From \eqref{e12} one finds \be \xi \cdot
\tilde{G}=-\frac{\xi}{(2\pi)^3k^2}. \ee Thus, \be\label{e14}
\tilde{G}_{ij}=\frac{\dl_{ij}}{(2\pi)^3(\xi^2-k^2)}-\frac{\xi_i\xi_j}{(2\pi)^3k^2(\xi^2-k^2)}.
\ee 
Taking the inverse Fourier transform and using the radiation
condition \eqref{e6}, one gets \be\label{e15}
G_{ij}(x)=g(x)\dl_{ij}+\frac{1}{k^2}\partial_{ij}g(x);\quad
g=\frac{e^{ik|x|}}{4\pi|x|},\quad
\partial_i:=\frac{\partial}{\partial x_i}. \ee
From \eqref{e15} and \eqref{e9} one gets:
\be\label{e16}\begin{split}
E_i(x)&=E_{0i}(x)+(K^2-k^2)\int_Dg(x,y)E_i(y)dy\\
&+\frac{K^2-k^2 }{k^2}\frac{\partial}{\partial
x_i}\int_D\frac{\partial g(x,y)}{\partial x_j}E_j(y)dy,\quad 1\leq
i\leq 3,\end{split}\ee where summation over the repeated indices is
understood. Equation \eqref{e16} is a vector singular integral
equation. The operator 
$$TE=(K^2-k^2)\int_D g(x,y)E(y)dy $$ is compact
in $L^2(D)$. Let 
\be\label{e17} \gamma:=\frac{K^2-k^2}{k^2},\qquad
QE=\nabla \int_D\nabla_xg(x,y)\cdot E(y)dy. \ee 
Then equation \eqref{e16} can
be written as \be\label{e18} E=E_0+TE+\gamma QE. \ee This is
equation \eqref{e9}.

Numerically one can solve equation \eqref{e16} (or \eqref{e18}) by a
projection method. For example, let $\{\phi_j(x)\}$ be a basis of
$L^2(D)$ and $\phi_j\in H^{1}_0(D)$, where $H^{1}_0(D)$ is the
closure of $C_0^\infty(D)$ functions in the norm of the Sobolev
space $H^1(D)$. Multiply equation \eqref{e16} by $\overline{\phi}_m$
(the bar stands for the complex conjugate), integrate over $D$ and
then the third term by parts, to get: 
\be\label{e19}\begin{split}
&E_{im}=E_{0im}+(K^2-k^2)\int_D\int_D dx
\overline{\phi}_m(x)g(x,y)\sum_{m'=1}^ME_{im'}\phi_{m'}(y)\\
&-\gamma\int_Ddx\frac{\partial\overline{\phi}_m(x)}{\partial
x_i}\int_D\frac{\partial g(x,y)}{\partial
x_j}\sum_{m'=1}^ME_{jm'}\phi_{m'}(y)dy,\quad 1\leq m\leq M,\ 1\leq
i\leq 3.
\end{split}\ee
This is a linear algebraic system for finding the coefficients:
\be\label{e20}
E_{im}^{(M)}:=E_{im}:=\int_DE_i(x)\overline{\phi}_m(x)dx .\ee The
number $M$ determines the accuracy of the appproximate solution
$E(x)$. One has \be\label{e21} \lim_{M\to
\infty}\|E^{(M)}(x)-E(x)\|_{L^2(D)}=0. \ee

\end{document}